\newtheorem*{theorem*}{Theorem}
\newcommand{\stam}[1]{}
\newcommand{\ALG}{\texttt{ALG}}
\newcommand{\E}{\mathbb{E}}
\newcommand{\assign}[2]{A_{#1,#2}}
\newcommand{\assignn}[3]{A^{#3}_{#1,#2}}
\begin{document}
\title{Prophet Inequality with Competing Agents}
%
%\titlerunning{Abbreviated paper title}
% If the paper title is too long for the running head, you can set
% an abbreviated paper title here
%
%\author{}
%\institute{}
\author{Tomer Ezra\inst{1} \and
Michal Feldman\inst{1,2} \and
Ron Kupfer\inst{3}}

\authorrunning{Ezra et al.}
% First names are abbreviated in the running head.
% If there are more than two authors, 'et al.' is used.
%
\institute{Tel Aviv University, Israel\\ \and
Microsoft Research\\
\and
The Hebrew University of Jerusalem, Israel\\
\email{tomer.ezra@gmail.com}, 
\email{michal.feldman@cs.tau.ac.il}, 
\email{kupfer.ron@gmail.com}}

\maketitle              % typeset the header of the contribution
\begin{abstract}
We introduce a model of {\em competing} agents in a prophet setting, where rewards arrive online, and decisions are made immediately and irrevocably.
The rewards are unknown from the outset, but they are drawn from a known probability distribution.
In the standard prophet setting, a single agent makes selection decisions in an attempt to maximize her expected reward.
The novelty of our model is the introduction of a competition setting, where multiple agents compete over the arriving rewards, and make online selection decisions simultaneously, as rewards arrive.
If a given reward is selected by more than a single agent, ties are broken either randomly or by a fixed ranking of the agents. 
The consideration of competition turns the prophet setting from an online decision making scenario to a multi-agent game. 

For both random and ranked tie-breaking rules, we present simple threshold strategies for the agents that give them high guarantees, independent of the strategies taken by others. 
In particular, for random tie-breaking, every agent can guarantee herself at least $\frac{1}{k+1}$ of the highest reward, 
and at least $\frac{1}{2k}$ of the optimal social welfare. 
For ranked tie-breaking, the $i$th ranked agent can guarantee herself at least a half of the $i$th highest reward.
We complement these results by matching upper bounds, even with respect to equilibrium profiles.
For ranked tie-breaking rule, we also show a correspondence between the equilibrium of the $k$-agent game and the optimal strategy of a single decision maker who can select up to $k$ rewards.

%We study prophet settings where rewards arrive in an online manner and decisions are made over time. 
%Unlike the standard setting, where a single agent makes online decisions, we consider settings where multiple agents compete over the rewards. 
%The consideration of competition within these scenarios  turns it from an online decision making setting to a multi-agent game.
%If more than a single agent select an reward, ties are broken either randomly or by ranked (based on a known global ranking of the agents). 
%We study the equilibria that arise in the induced games, and devise simple (threshold) strategies for individual agents that provide them with high guarantees. 

\keywords{Prophet Inequality \and Multi-Agent System \and Threshold-Strategy.}
\end{abstract}
	\section{Introduction}
\label{sec:intro}
%\michal{If submitting to AAAI, say a few sentences about multi-agent systems, e-commerce applications, and how competition is an important component. Check for other prophet papers that were published in AAAI/IJCAI. 
%Emphasize the fact that the first connection between prophet inequality and mechanism design was revealed in the 07 paper.} \tomer{fixed?}

In the classical prophet inequality problem a decision maker observes a sequence of $n$ non-negative real-valued rewards $v_1, \ldots, v_n$ that are drawn from known independent distributions $F_1,\ldots ,F_n$. 
At time $t$, the decision maker observes reward $v_t$, and needs to make an immediate and irrevocable decision whether or not to accept it. If she accepts $v_t$, the game terminates with value $v_t$; otherwise, the reward $v_t$ is gone forever and the game continues to the next round.
The goal of the decision maker is to maximize the expected value of the accepted reward.

This family of problems captures many real-life scenarios, such as an employer who interviews potential workers overtime, renters looking for a potential house, a person looking for a potential partner for life, and so on. More recently, starting with the work of \citet{HajiaghayiKS07}, the prophet inequality setting has been studied within the AI community in the context of market and e-commerce scenarios, with applications to pricing schemes for social welfare and revenue maximization. For a survey on a market-based treatment of the prophet inequality problem, see the survey by~\citet{lucier2017economic}.  
%\ronedit{Prophet settings have interesting and important implications to mechanism design and pricing applications for both welfare and revenue maximization \cite{HajiaghayiKS07,DuettingFKL17,kleinberg2012matroid,EzraFN18}.}

An algorithm $\ALG$ has a guarantee $\alpha$ if  the expected value of   $\ALG$ is at least $\alpha$, where the expectation is taken over the coin flips of the algorithm, and the probability distribution of the input. 
\citet{KrengelS77,krengel1978semiamarts} established the existence of an algorithm that gives a tight guarantee of $\frac{1}{2}\E[\max_i v_i]$.  
Later, it has been shown that this guarantee can also be obtained by a single-threshold algorithm--- an algorithm that specifies some threshold from the outset, and accepts a reward if and only if it exceeds the threshold. 
Two such thresholds have been presented by \citet{samuel1984comparison,KleinbergW19}.  
Single-threshold algorithms are simple and easy to explain and implement. 
%\tomer{fix the citations and single threshold}
%\ron{this is the median threshold. where is the max/2 threshold?}
%\michal{the other threshold appears in the Kleinberg-Weinberg paper on matroid prophets.}
%A single-threshold strategy is the strategy where the decision maker selects  rewards if they exceed the threshold.
%These strategies are simple non-adaptive and are easy to implement and explain.

%In the prophet scenario, the best strategy is calculated using backwards induction, which amounts to a vector of $n-1$ thresholds, $t_1, \ldots, t_{n-1}$, where the last award is always accepted, and otherwise award $v_i$ is accepted iff $v_i \geq t_i$. 
%In the worst-case, this strategy achieves a competitive ratio of $2$, which can also be achieved by a simple single threshold strategy.

\paragraph{Competing Agents.}
Most attention in the literature has been given to scenarios with a single decision maker. 
Motivated by the economic aspects of the problem, where competition among multiple agents is a crucial factor, we introduce a multi-agent variant of the prophet model, in which multiple agents compete over the rewards. 

%In this model, a set of $k$ agents compete over a set of awards that arrive online. %Similar scenarios of competing agents has been studied by \cite{immorlica2006secretary} and \cite{karlin2015competitive} in secretary settings.
In our model, a sequence of $n$ non-negative real-valued rewards $v_1, \ldots, v_n$ arrive over time, and a set of $k$ agents make immediate and irrevocable selection decisions. 
The rewards are unknown from the outset, but every reward $v_t$ is drawn independently from a known distribution $F_t$. 
Upon the arrival of reward $v_t$, its value is revealed to all agents, and every agent decides whether or not to select it.

One issue that arises in this setting is how to resolve ties among agents. That is, who gets the reward if more than one agent selects it. We consider two natural tie-breaking rules; namely, {\em random} tie breaking (where ties are broken uniformly at random) and {\em ranked} tie-breaking (where agents are a-priori ranked by some global order, and ties are broken in favor of higher ranked agents). Random tie-breaking fits scenarios with symmetric agents, whereas ranked tie-breaking fits scenarios where some agents are preferred over others, according to some global preference order. For example, it is reasonable to assume that a higher-position/salary job is preferred over lower-position/salary job, or that firms in some industry are globally ordered from most to least desired. 
Random and ranked tie-breaking rules were considered in \citet{immorlica2006secretary} and \citet{karlin2015competitive}, respectively, in secretary settings.

Unlike the classical prophet scenario, which studies the optimization problem of a single decision maker, the setting of competing agents induces a game among multiple agents, were an agent's best strategy depends on the strategies chosen by others. Therefore, we study the equilibria of the induced games. In particular, we study the structure and quality of equilibrium in these settings and devise simple strategies that give agents high guarantees.

When the order of distributions is unknown in advance, calculating the optimal strategy is computationally hard. This motivates the use of simple and efficiently computed strategies that give good guarantees.

\subsection{Main Results and Techniques}
\label{sec:results}

For both random and ranked tie-breaking rules, we present simple single-threshold strategies for the agents that give them high guarantees. 
A single-threshold strategy specifies some threshold $T$, and selects any reward that exceeds $T$.

For $j=1, \ldots, n$, let $y_j$ be the $j$th highest reward. 

Under the random tie-breaking rule, we show a series of thresholds that have the following guarantee:

\begin{theorem*}(Theorem \ref{thm:prophet_threshold_inequality})
	For every $\ell=1, \ldots, n$, let $T^\ell = \frac{1}{k+\ell}\sum_{j=1}^{\ell} \E [y_j]$. 
	Then, for every agent, the single threshold strategy $T^{\ell}$ (i.e., select $v_t$ iff $v_t\geq T^\ell$) guarantees an expected utility of at least $T^{\ell}$.
\end{theorem*}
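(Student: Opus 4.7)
My plan is to adapt the classical Samuel--Cahn single-threshold argument to the competing-agents setting. Fix an arbitrary agent, suppose she plays the threshold $T := T^\ell$, and fix arbitrary (adaptive, possibly randomized) strategies for the other $k-1$ agents. Let $W_t$ denote the event that she wins reward $v_t$ and write $p := \sum_t \Pr[W_t]$ for her overall winning probability. Since she only selects rewards of value at least $T$, whenever $W_t$ occurs we have $v_t \geq T$, and I would split her expected utility as
\[
\E[U] \;=\; T\,p \;+\; \sum_{t=1}^n \E\!\left[(v_t - T)^+ \cdot \mathbf{1}[W_t]\right].
\]
The goal then reduces to showing that the surplus sum is at least $T(1-p)$.

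The key observation is that $W_t$ factors as ``she is still active at time $t$'' (determined by the history strictly before $t$) together with ``she wins the uniform tie-break at $t$''. Conditioning on all events up through the selection decisions at time $t$, at most $k-1$ other agents can contest her selection, so the tie-break is won with probability at least $1/k$, regardless of their strategies. Combining this with the fact that the indicator ``active at $t$'' is a function of history strictly before $t$ --- in particular, independent of $v_t$ --- yields
\[
\E\!\left[(v_t - T)^+ \cdot \mathbf{1}[W_t]\right] \;\geq\; \tfrac{1}{k}\,\Pr[\text{active at } t]\,\E[(v_t - T)^+] \;\geq\; \tfrac{1-p}{k}\,\E[(v_t - T)^+],
\]
where the last step uses the containment ``never wins'' $\subseteq$ ``active at $t$''.

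What remains is the deterministic bound $\sum_{t=1}^n \E[(v_t - T)^+] \geq kT$. I would obtain it by restricting the sum to the top $\ell$ order statistics: pointwise, $\sum_t (v_t - T)^+ \geq \sum_{j=1}^\ell (y_j - T)^+ \geq \sum_{j=1}^\ell (y_j - T)$, and taking expectations together with the defining identity $\sum_{j=1}^\ell \E[y_j] = (k+\ell)T$ gives exactly $(k+\ell)T - \ell T = kT$. Assembling the three pieces then produces $\E[U] \geq T p + \tfrac{1-p}{k}\cdot kT = T$, as desired.

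I expect the only real subtlety to be the conditioning in the second step: because other agents may play arbitrarily adaptive strategies, we cannot treat ``active at $t$'' and ``wins tie at $t$'' as independent a priori. The clean way to handle this is to condition through the $\sigma$-algebra generated by history up to (and including) the selection decisions at time $t$; the only randomness left is then the tie-break coin, which contributes a factor of at least $1/k$ uniformly, regardless of the number of competitors. Everything else --- the decomposition and the order-statistics inequality --- is algebraic manipulation once the definition of $T^\ell$ is unfolded.
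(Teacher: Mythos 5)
Your proposal is correct and follows essentially the same route as the paper's proof: the same decomposition of the utility into $Tp$ plus a surplus term, the same $\tfrac{1-p}{k}$ lower bound obtained from ``never wins $\subseteq$ active at $t$'' together with the $1/k$ tie-breaking probability, and the same restriction to the top $\ell$ order statistics to get $\sum_t \E[(v_t-T)^+]\geq kT$. Your explicit handling of the conditioning (separating the pre-$t$ history from the tie-break coin) is if anything slightly more careful than the paper's presentation, but it is not a different argument.
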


Two special cases of the last theorem are where $\ell=1$ and $\ell=k$. 
The case of $\ell=1$ implies that every agent can guarantee herself (in expectation) at least $\frac{1}{k+1}$ of the highest reward. 
The case of $\ell=k$ implies that every agent can guarantee herself (in expectation) at least $\frac{1}{2k}$ of the optimal social welfare (i.e., the sum of the highest $k$ rewards), which also implies that the social welfare in equilibrium is at least a half of the optimal social welfare. 

The above result is tight, as shown in Proposition~\ref{pro:lb_random}.

Similarly, for the ranked tie-breaking rule, we show a series of thresholds that have the following guarantee:
\begin{theorem*}(Theorem \ref{thm:prophet_serial_threshold_inequality})
	For every $i\leq n$ and $\ell=0, \ldots, n-i$, let $\hat{T}_i^\ell=\frac{1}{\ell+2}\sum_{j=i}^{i+\ell} \E [y_j]$. 
Then, for the $i$-ranked agent, the single threshold strategy $\hat{T}_i^\ell$ (i.e., select $v_t$ iff $v_t\geq \hat{T}_i^\ell$) guarantees an expected utility of at least $\hat{T}_i^\ell$.
\end{theorem*}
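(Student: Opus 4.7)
The starting observation is that under ranked tie-breaking, agent $i$ wins every tie against any lower-ranked agent, so no strategy of the agents ranked below $i$ can reduce her utility; it therefore suffices to lower bound her expected utility against arbitrary strategies of the $i - 1$ higher-ranked agents. Each higher-ranked agent consumes at most one reward, so for the threshold $T := \hat{T}_i^\ell$ and $X := |\{t : v_t \ge T\}|$, whenever $X \ge i$ at most $i - 1$ of the $X$ rewards above $T$ can be blocked, and agent $i$ is guaranteed to accept some reward of value at least $T$; this already yields $\E[U_i] \ge T \cdot \Pr[X \ge i]$.

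The remaining mass $T \cdot \Pr[X < i]$ must be compensated by the expected excess captured at the accepted reward. I would argue by a monotonicity-style analysis on the higher-ranked agents' strategies that the worst case for agent $i$ occurs when each higher-ranked agent also plays threshold $T$: under that symmetric profile the higher-ranked agents absorb the first $i - 1$ rewards $\ge T$ in arrival order and agent $i$ takes the $i$-th such reward $R_i$ (setting $R_i = 0$ on $\{X < i\}$), giving $\E[U_i] \ge \E[R_i \cdot \mathbf{1}[X \ge i]]$. Splitting $R_i = T + (R_i - T)$ on $\{X \ge i\}$ and using independence of $v_t$ from the history yields
\[
\E[R_i \cdot \mathbf{1}[X \ge i]] \;=\; T \cdot \Pr[X \ge i] \;+\; \sum_t \E[(v_t - T)^+] \cdot \Pr[N_{<t} = i - 1],
\]
where $N_{<t} := |\{s < t : v_s \ge T\}|$.

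The calibration $(\ell + 2)T = \sum_{j=i}^{i+\ell}\E[y_j]$, combined with the pointwise bound $y_j \le (y_j - T)^+ + T$, delivers $\sum_{j=i}^{i+\ell}\E[(y_j - T)^+] \ge T$; and by the rearrangement $\sum_t (v_t - T)^+ = \sum_j (y_j - T)^+$ as random variables, this gives $\sum_t \E[(v_t - T)^+] \ge T$. The main technical obstacle is then to upgrade this unweighted excess bound to the weighted form $\sum_t \E[(v_t - T)^+] \cdot \Pr[N_{<t} = i - 1] \ge T \cdot \Pr[X < i]$ that closes the proof: for $i = 1$ the classical Samuel-Cahn argument invokes the monotonicity $\Pr[N_{<t} = 0] \ge \Pr[X = 0]$, but for $i \ge 2$ the probability $\Pr[N_{<t} = i - 1]$ is not monotone in $t$, so this monotonicity-based closing step must be replaced by a more careful pathwise pairing that distributes the order-statistic excesses $(y_j - T)^+$, $j = i, \ldots, i + \ell$, across the failure events $\{X < i\}$, using the $T$ units of slack already encoded in the choice of $T$.
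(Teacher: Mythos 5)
There are two genuine gaps here, and you flag the second one yourself. First, the reduction you propose --- that the worst case for agent $i$ is when every higher-ranked agent also plays the threshold $T$, so that $\E[U_i]\ge \E[R_i\cdot\mathbf{1}[X\ge i]]$ --- is false. Take $i=2$, $\ell=0$, $n=2$, with $v_1=c$ deterministic and $v_2=M$ with probability $q$ (and $0$ otherwise). Then $T=\hat T_2^0=\tfrac12\E[y_2]=qc/2\le c$, so on the event $v_2=M$ both rewards clear the threshold and $R_2=M$, giving $\E[R_2\cdot\mathbf{1}[X\ge 2]]=qM$. But against a higher-ranked agent who ignores $v_1$ and selects only $v_2$, agent $2$ always takes $v_1$ and earns exactly $c$, which is strictly below $qM$ once $M>c/q$. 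So the symmetric profile is not the minimizer and your intermediate bound fails (the theorem itself survives here, since $c\ge T$, but your route to it does not). Second, even restricted to the symmetric profile, you leave the closing inequality $\sum_t\E[(v_t-T)^+]\Pr[N_{<t}=i-1]\ge T\cdot\Pr[X<i]$ unproved, correctly observing that the Samuel--Cahn monotonicity $\Pr[N_{<t}=0]\ge\Pr[X=0]$ has no analogue for $i\ge 2$; the ``pathwise pairing'' you gesture at is precisely the missing content.

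The paper's proof avoids both problems by never characterizing a worst-case opponent profile and never tracking per-time-step counts. It fixes an arbitrary $S_{-i}$, writes $u_i(S)=p\,\hat T_i^\ell+\E\bigl[\sum_j(v_j-\hat T_i^\ell)^+\Pr(\forall_{r<j}\overline{A_{i,r}},A_{i,j})\bigr]$ where $p$ is the probability that agent $i$ wins \emph{anything} under the actual profile, lower-bounds the probability of still being active at time $j$ by $1-p$, and then uses the single structural fact that at most $i-1$ rewards above the threshold can be selected but lost to higher-ranked agents to bound the conditional surplus below by $\E\bigl[\sum_{j=i}^{i+\ell}(y_j-\hat T_i^\ell)\bigr]$. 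The calibration $(\ell+2)\hat T_i^\ell=\sum_{j=i}^{i+\ell}\E[y_j]$ makes that surplus exactly $\hat T_i^\ell$, so $u_i\ge p\,\hat T_i^\ell+(1-p)\,\hat T_i^\ell=\hat T_i^\ell$. Your calibration identity $\sum_{j=i}^{i+\ell}\E[(y_j-T)^+]\ge T$ is the right ingredient; the fix is to multiply it by $1-p$ (the probability of total failure under the actual profile) rather than trying to distribute it over the events $\{N_{<t}=i-1\}$.
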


This result implies that for every $i$, the $i$-ranked agent can guarantee herself (in expectation) at least a half of the $i^{th}$ highest reward.
In Proposition~\ref{pro:lb_serial} we show that the last result is also tight.

%(Theorems \ref{thm:prophet_threshold_inequality} and \ref{thm:prophet_serial_threshold_inequality}).
%We also present matching lower bounds (see Propositions \ref{pro:lb_random} and \ref{pro:lb_serial}).

Finally, we show that under the ranked tie-breaking rule, the equilibrium strategies of the (ordered) agents coincide with the decisions of a single decision maker who may select up to $k$ rewards in an online manner and wishes to maximize the sum of selected rewards.
Thus, the fact that every agent is aware of her position in the ranking allows them to coordinate around the socially optimal outcome despite the supposed competition between them.  
\begin{theorem*}(Corollary \ref{cor:welfare_prophet})
	Under the ranked tie-breaking rule, in every equilibrium of the $k$-agent game the expected social welfare is at least $1-O(\frac{1}{\sqrt{k}})$ of the optimal welfare.
\end{theorem*}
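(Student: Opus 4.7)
The plan is to leverage the correspondence between equilibria of the ranked-tie-breaking game and an auxiliary single-agent online problem, and then appeal to a known bound for that auxiliary problem. Specifically, the statement immediately preceding the corollary asserts that in every equilibrium of the $k$-agent game under ranked tie-breaking, the joint selections of the agents are distributionally identical to the selections made by a single decision maker who can pick up to $k$ rewards online to maximize their sum. Granted this correspondence (which I would cite as the preceding theorem), the social welfare in any equilibrium equals the expected total reward of the optimal online $k$-selection policy on the same instance.

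Next I would invoke the generalized $k$-choice prophet inequality, due to Hajiaghayi, Kleinberg and Sandholm and refined by Alaei, which states that the optimal online algorithm selecting up to $k$ items attains a $\bigl(1-O(1/\sqrt{k})\bigr)$ fraction of $\mathbb{E}[\sum_{j=1}^{k} y_j]$, the expected sum of the top $k$ rewards (the offline optimum). Combining the two observations yields the desired bound:
\begin{equation*}
\mathbb{E}[\text{welfare at equilibrium}] \;=\; \mathbb{E}[\text{online $k$-selection}] \;\geq\; \Bigl(1-O\bigl(\tfrac{1}{\sqrt{k}}\bigr)\Bigr)\,\mathbb{E}\!\left[\sum_{j=1}^{k} y_j\right].
\end{equation*}

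The hard part is not the corollary itself but the correspondence theorem that it rests on: one must argue that in every equilibrium the ranked agents' combined decisions form an online $k$-selection policy that is actually optimal for the $k$-selection problem, not just some feasible one. The intuition I would use is a backward induction on rank: the lowest-ranked (\,$k$th) agent effectively faces a single-item prophet problem over the rewards the higher-ranked agents decline, and her best response is a threshold matching the optimal marginal threshold of the $k$-selection problem. Inducting upwards, the $i$-th ranked agent's best response, given optimal play by lower-ranked agents, coincides with the $i$-th marginal threshold of the optimal $k$-selection policy. This gives both existence and uniqueness (in terms of welfare) of equilibrium behavior, and in particular shows that \emph{every} equilibrium realizes the optimal online $k$-selection value, which is what the corollary needs.

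Once this correspondence is in hand, the corollary is essentially a one-line deduction, so the only non-trivial quantitative ingredient is the $1-O(1/\sqrt{k})$ guarantee for online $k$-selection, which I would import as a black box rather than reprove.
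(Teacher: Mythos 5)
Your proposal matches the paper's argument: the corollary is derived exactly by combining the preceding correspondence theorem (that in the equilibrium of the ranked game the agents' joint selections realize the optimal online $k$-selection policy, proved by induction on the agents' ranks) with the $1-O(1/\sqrt{k})$ guarantee for online $k$-selection imported from \citet{alaei2011bayesian} and \citet{HajiaghayiKS07} as a black box. Your backward-induction sketch for the correspondence is also the same idea as the paper's inductive proof, so there is nothing substantively different to flag.
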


A similar phenomenon was observed in a related secretary setting, where the equilibrium strategy profile of a game with several ranked agents, induces an optimal strategy for a single decision maker who is allowed to choose several rewards and wishes to maximize the probability that the highest reward is selected \cite{matsui2016lower}. 
%\michal{In particular, PLEASE COMPLETE}

%\input{techniques}
\subsection{Additional Related Literature}
%secretary and prophet surveys
%The prophet model describes situations where a single decision maker selects an award from a series  of online-arriving awards. 
The prophet problem and variants thereof has attracted a vast amount of literature in the last decade. For comprehensive surveys, see, e.g., the survey by %Hill and Kertz 
\citet{hill1992survey} and the survey by %Lucier 
\citet{lucier2017economic} which gives an economic view of the problem. 

A related well-known problem in the optimal stopping theory is the {\em secretary} problem, where the rewards are arbitrary but arrive in a random order. For the secretary problem a tight $1/e$-approximation has been established; for a survey, see, e.g., \cite{Ferguson89whosolved}.

Our work is inspired by a series of studies that consider scenarios where multiple agents compete over the rewards in secretary-like settings, where every agent aims to receive the highest reward.
\citet{karlin2015competitive} and \citet{immorlica2006secretary} considered the ranked- and the random tie-breaking rules, respectively, in secretary settings with competition.
For the ranked tie-breaking rule, \citet{karlin2015competitive} show that the equilibrium strategies take the form of {\em time-threshold strategies}; namely, the agent waits until a specific time $t$, thereafter competes over any reward that is the highest so far. The values of these time-thresholds are given by a recursive formula.
For the random tie-breaking rule, \citet{immorlica2006secretary} characterize the  Nash equilibria of the game and show that for several classes of strategies (such as threshold strategies and adaptive strategies), as the number of competing agents grows, the timing in which the earliest reward is chosen decreases. This confirms the argument that early offers in the job market are the result of competition between employers.
%For the random tie-breaking rule, \citet{immorlica2006secretary} characterize the (pure and mixed) Nash equilibria of the game and show in two models, one in which only time-threshold strategies are allowed, and a second model where agents have adaptive strategies that depend on previous actions taken by other agents. \michal{add a bit more details. For example, what is the key result? are adaptive strategies better than time-threshold strategies?}

Competition among agents in secretary settings has been also studied by \citet{ezra2020competitive}, in a slightly different model. 
Specifically, in their setting, decisions need not be made immediately; rather, any previous reward can be selected as long as it is still available (i.e., has not been taken by a different agent). Thus, the competition is inherent in the model.

Another related work is the dueling framework by \citet{immorlica2011dueling}. 
One of their scenarios considers a 2-agent secretary setting, where one agent aims to maximize the probability of getting the highest reward (as in the classical secretary problem), and the other agent aims to outperform her opponent. 
They show an algorithm for the second agent that guarantees her a winning probability of at least $0.51$. 
They also establish an upper bound of $0.82$ on this probability. 

Other competitive models have been considered in the optimal stopping theory; see \cite{abdelaziz2007optimal} for a survey. 

%\ron{survey on competing agents in optimal stopping: \citet{abdelaziz2007optimal}. in this survey, most relevant (not exactly our model but similar) is \citet{enns1987multi}(couldn't find online). Newer and might be relevant: \cite{cownden2014effects}}.

The work of \citet{KleinbergW19} regarding matroid prophet problems is also related to our work.
They consider a setting where a single decision maker makes online selections under a matroid feasibility constraint, and show an  algorithm that achieve 1/2-approximation to the expected optimum for arbitrary matroids.
For the special case of uniform matroids, namely selecting up to $k$ rewards, earlier works of \citet{alaei2011bayesian} and \citet{HajiaghayiKS07} shows a approximation of $1-O(\frac{1}{\sqrt{k}})$ for the optimal solution. As mentioned above, the same guarantee is obtained in a setting with $k$ ranked competing agents.

%\paragraph{Online Matching.} \ron{not sure what to say}
\subsection{Structure of the Paper}
In Section \ref{sec:model} we define our model. 
In Sections \ref{sec:prophet_random}  and \ref{sec:prophet_lex} we present our results with respect to the
random tie-breaking rule, and the ranked tie-breaking rule, respectively.
We conclude the paper in Section \ref{sec:future} with future directions.

\section{Model	}
\label{sec:model}
We consider a prophet inequality variant, where a set of $n$ rewards, $v_1, \ldots, v_n$, are revealed online. While the values $v_1, \ldots, v_n$ are unknown from the outset, $v_t$ is drawn independently from a known probability distribution $F_t$, for $t\in[n]$, where $[n]=\{1,\ldots,n\}$. 
In the classical prophet setting, a single decision maker observes the realized reward $v_t$ at time $t$, and makes an immediate and irrevocable decision whether to take it or not. If she takes it, the game ends. Otherwise, the reward $v_t$ is lost forever, and the game continues with the next reward.

Unlike the classical prophet setting that involves a single decision maker, we consider a setting with $k$ decision makers (hereafter, agents) who compete over the rewards. 
Upon the revelation of reward $v_t$, every active agent (i.e., an agent who has not received a reward yet) may select it.
If a reward is selected by exactly one agent, then it is assigned to that agent. 
If the reward $v_t$ is selected by more than one agent, it is assigned to one of these agents either randomly (hereafter, {\em random tie-breaking}), or according to a predefined ranking (hereafter, {\em ranked tie-breaking}).
Agents who received rewards are no longer active.

%Given an instance of a game, the history at time $t$
%includes all the relevant information revealed up to time $t$; i.e., the realizations of $v_1,\ldots,v_t$, and the assignments up to time $t-1$.\footnote{In our setting, additional information, such as the history of selections (in contrast to assignments) is irrelevant for future decision making.}

A strategy of agent $i$, denoted by $S_i$, is a function that for every $t=1, \ldots, n$, decides whether or not to select $v_t$, based on $t$, the realization of $v_t$, and the set of active agents\footnote{One can easily verify that in our setting, additional information, such as the history of realizations of $v_1, \ldots, v_{t-1}$, and the history of selections and assignments, is irrelevant for future decision making.}. 
A strategy profile is denoted by $S=(S_1,\ldots,S_k)$.
We also denote a strategy profile by $S=(S_i,S_{-i})$, where $S_{-i}$ denotes the strategy profile of all agents except agent $i$.

Every strategy profile $S$ induces a distribution over assignments of rewards to agents. 
For ranked tie breaking, the distribution is with respect to the realizations of the rewards, and possibly the randomness in the agent strategies. For random tie breaking, the randomness is also with respect to the randomness in the tie-breaking. 
 
The utility of agent $i$ under strategy profile $S$, $u_i(S)$, is her expected reward under $S$; every agent acts to maximize her utility.

We say that a strategy $S_i$ guarantees agent $i$ a utility of $\alpha$ if $u_i(S_i,S_{-i}) \geq \alpha$ for every $S_{-i}$. 

\begin{definition}
A single threshold strategy $T$ is the strategy that upon the arrival of reward $v$, $v$ is selected if and only if the agent is still active and $v_t \geq T$.
\end{definition}

We also use the following equilibrium notions:

\begin{itemize}
	\item Nash Equilibrium (NE): A strategy profile $S=(S_1,\ldots, S_k)$ is a NE if for every agent $i$ and every strategy $S'_i$, it holds that $u_i(S'_i,S_{-i}) \leq u_i(S_i,S_{-i})$.
	
	\item Subgame perfect equilibrium (SPE): A strategy profile $S=(S_1,\ldots, S_k)$ is an SPE if $S$ is a NE for every subgame of the game. I.e. for every initial history $h$, $S$ is a NE in the game induced by history $h$. 
\end{itemize}

SPE is a refinement of NE; namely, every SPE is a NE, but not vice versa.

In the next sections, we let $y_j$ denote the random variable that equals the $j^{th}$ maximal reward among $\{v_1,\ldots,v_n\}$.

\section{Random Tie-Breaking}
\label{sec:prophet_random}
In this section we consider the random tie-breaking rule. 
%We present a series of single threshold strategies, analyze their guarantees and show matching lower bounds.

We start by establishing a series of single threshold strategies that guarantee high utilities.
\begin{theorem}
		\label{thm:prophet_threshold_inequality}
		For every $\ell=1, \ldots, n$, let $T^\ell = \frac{1}{k+\ell}\sum_{j=1}^{\ell} \E [y_j]$. 
		Then, for every agent, the single threshold strategy $T^{\ell}$ (i.e., select $v_t$ iff $v_t\geq T^\ell$) guarantees an expected utility of at least $T^{\ell}$.
\end{theorem}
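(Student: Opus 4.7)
The plan is to combine a Samuel-Cahn-style threshold analysis with the fact that under random tie-breaking, any tie involving agent $i$ is broken in her favor with probability at least $1/k$. Concretely, I will first extract from the definition of $T = T^\ell$ an ``aggregate overshoot'' inequality $\sum_t \E[(v_t - T)_+] \geq k T$, and then use it inside a decomposition of $u_i$ into a ``threshold'' part and an ``excess'' part to conclude $u_i \geq T$.

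For the aggregate overshoot, start from $y_j \leq T + (y_j - T)_+$ for each $j$ and sum over $j = 1, \ldots, \ell$; since the top-$\ell$ order statistics lie among the $v_t$'s, $\sum_{j=1}^\ell (y_j - T)_+ \leq \sum_t (v_t - T)_+$. Taking expectations and plugging in $\sum_{j=1}^\ell \E[y_j] = (k+\ell)T$ yields $\sum_t \E[(v_t - T)_+] \geq kT$.

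Let $W_i$ denote the event that agent $i$ ever wins, $R_i$ her collected reward (zero on $\overline{W_i}$), $A_t$ the event that $i$ is active at the start of round $t$, and $q = 1 - \Pr[W_i]$. The threshold strategy forces $R_i \geq T$ on $W_i$, giving the clean decomposition $u_i = T \Pr[W_i] + \E[(R_i - T)_+]$. It therefore suffices to prove $\E[(R_i - T)_+] \geq q T$, since combining with $T \Pr[W_i] = T(1-q)$ yields $u_i \geq T$. To this end I would write $\E[(R_i - T)_+] = \sum_t \E[(v_t - T)_+ \mathbf{1}[i \text{ wins at time } t]]$ and bound each summand. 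Conditioning on the history before round $t$ and on $v_t$: the indicator $\mathbf{1}[A_t]$ depends only on earlier rounds and so is independent of $v_t$; moreover, given $A_t$ and $v_t \geq T$ agent $i$ selects, at most $k$ agents compete, and random tie-breaking gives her winning probability at least $1/k$. The tower property plus independence then yields
\[
\E\bigl[(v_t - T)_+\, \mathbf{1}[i \text{ wins at } t]\bigr] \geq \tfrac{1}{k}\, \E[(v_t - T)_+]\, \Pr[A_t].
\]
Monotonicity of ``active'' in $t$ gives $\Pr[A_t] \geq \Pr[\overline{W_i}] = q$. Summing over $t$ and invoking the overshoot inequality closes the argument:
\[
\E[(R_i - T)_+] \;\geq\; \tfrac{q}{k} \sum_t \E[(v_t - T)_+] \;\geq\; \tfrac{q}{k} \cdot k T \;=\; q T.
\]

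The delicate step I expect to be the main obstacle is the conditional expectation manipulation: the $1/k$ lower bound on $i$'s tie-winning probability must hold \emph{pointwise} in $v_t$ (given $A_t$ and $v_t \geq T$) so that it survives the inner conditional expectation and can be pulled through $v_t$'s integration via independence. This is fine because the number of agents competing at round $t$ is always at most $k$ regardless of $v_t$, and random tie-breaking is uniform over the competing set; once this subtlety is in place, the rest is a routine Samuel-Cahn-style calculation.
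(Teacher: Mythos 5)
Your proposal is correct and follows essentially the same route as the paper's proof: the same decomposition of the utility into a threshold part plus an expected overshoot, the same $1/k$ lower bound on winning conditioned on being active and exceeding the threshold, the same bound $\Pr[A_t]\geq 1-\Pr[W_i]$, and the same use of the top-$\ell$ order statistics together with the definition of $T^{\ell}$ to close the calculation. The only cosmetic difference is that you isolate the aggregate overshoot inequality $\sum_t \E[(v_t-T)_+]\geq kT$ as a standalone lemma, which the paper performs inline at the end of its chain of inequalities.
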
	
\begin{proof}
	Fix an agent $i$. Let $S_{-i}$ be the strategies of all agents except agent $i$, and let $S=(T^\ell,S_{-i})$.
	Let $\assignn{i}{j}{S}$ denote the event that agent $i$ is assigned the reward $v_j$ in strategy profile $S$. I.e., $\assignn{i}{j}{S}$ is the event that agent $i$ competed over reward $v_j$ and received it according to the random tie-breaking rule. 
	For simplicity of presentation, we omit $S$ and write $\assign{i}{j}$.
	It holds that
%%	\begin{eqnarray*}
%\[		 u_i(S)  =
%		 \E\left[ \sum_{j=1}^{n}{v_j\cdot \Pr\left(\assign{i}{j}\right)}\right] =  \]
%\[ 		  \E\left[\sum_{j=1}^{n}{(T^\ell+v_j-T^\ell) \Pr\left(v_j\geq T^\ell,\forall_{r<j}  \overline{\assign{i}{r}},\assign{i}{j}\right)}\right]. \nonumber\]
%%	\end{eqnarray*}

\begin{eqnarray*}
	 u_i(S)  & = & \E\left[ \sum_{j=1}^{n}{v_j\cdot \Pr\left(\assign{i}{j}\right)}\right]\\
	& = & \E\left[\sum_{j=1}^{n}{(T^\ell+v_j-T^\ell) \Pr\left(v_j\geq T^\ell,\forall_{r<j}  \overline{\assign{i}{r}},\assign{i}{j}\right)}\right].
\end{eqnarray*}

Let $p=\sum_{j=1}^{n} \Pr(v_j\geq T^\ell,\forall_{j'<j}  \overline{\assign{i}{j'}},\assign{i}{j})$ (i.e., $p$ is the probability that agent $i$ receives some reward in strategy profile $S=(T^{\ell},S_{-i})$), and let $Z^{+}=\max\{Z,0\}$.
We can now write $u_i(S)$ as follows:
\begin{eqnarray*}
	u_i(S) 	& =  &p T^\ell + \E\left[\sum_{j=1}^{n}{(v_j-T^\ell)^+\Pr\left(\forall_{r<j}\overline{\assign{i}{r}},\assign{i}{j}\right)}\right]  \nonumber\\
	 &= & p\cdot T^\ell  + \E\Bigl[\sum_{j=1}^{n}(v_j-T^\ell)^+ \cdot \Pr\left(\forall_{r<j}\overline{\assign{i}{r}} \right)   \cdot\Pr\left(\assign{i}{j} \mid \forall_{r<j}\overline{\assign{i}{r}}\right)\Bigr] \nonumber\\
	 &\geq &	 p\cdot T^\ell + \E\Bigl[\sum_{j=1}^{n}(v_j-T^\ell)^+\cdot (1-p) \Bigr. \left. \cdot \Pr\left(\assign{i}{j} \mid \forall_{r<j}\overline{\assign{i}{r}}\right)\right] \nonumber\\
	 &\geq  &  p\cdot T^\ell +\frac{1-p}{k}\cdot \E\left[\sum_{j=1}^{n}(v_j-T^\ell)^{+}\right]. \nonumber 
\end{eqnarray*}		
The first inequality holds since the probability of not getting any reward until time $j$ is bounded by $1-p$ (i.e., the probability of not getting any reward). 
The last inequality holds since if $v_j-T^\ell\geq 0$ and agent $i$ is still active, the reward is selected, thus assigned with probability at least $1/k$. 
Since each term in the summation is non-negative, we get the following:
%After removing any smaller term, we further remove the non-negativity demand and, again, can only decrease the total sum. Hence,
\begin{eqnarray*}		
%		& = & p\cdot T^\ell +\frac{1-p}{k}\cdot \E\left[\sum_{j=1}^{n}(y_j-T^\ell)^{+}\right] \nonumber \\ 
	u_i(S)	& \geq & p\cdot T^\ell +\frac{1-p}{k}\cdot \E\left[\sum_{j=1}^{\ell}(y_j-T^\ell)^{+}\right] \nonumber \\ 
		& \geq  &p\cdot T^\ell + \frac{1-p}{k}\cdot \E\left[\sum_{j=1}^{\ell}y_j-\ell\cdot  T^\ell\right] \nonumber\\
		& =  &p\cdot T^\ell + \frac{1-p}{k}\cdot \left((k+\ell)\cdot T^\ell-\ell \cdot T^\ell\right)  = T^{\ell},\nonumber
\end{eqnarray*}
where the last equality follows by the definition of $T^\ell$.
\end{proof}

The special cases of $\ell=1$ and $\ell=k$ give the following corollaries:
\begin{corollary}
	The single-threshold strategy $T^k$ guarantees an expected utility of at least $\frac{1}{2k}\E[\sum_{i=1}^{k}y_i]$. 
\end{corollary}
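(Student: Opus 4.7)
The plan is to derive the corollary as an immediate instantiation of Theorem~\ref{thm:prophet_threshold_inequality} with the parameter choice $\ell = k$. Since the theorem has already been proved, nothing substantive remains; the work is purely bookkeeping on the definition of $T^{\ell}$.

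First I would unfold the definition: substituting $\ell=k$ into $T^\ell = \frac{1}{k+\ell}\sum_{j=1}^{\ell}\E[y_j]$ gives
\[
T^{k} \;=\; \frac{1}{k+k}\sum_{j=1}^{k}\E[y_j] \;=\; \frac{1}{2k}\sum_{j=1}^{k}\E[y_j].
\]
Then I would use linearity of expectation to rewrite the right-hand side as $\frac{1}{2k}\E\bigl[\sum_{j=1}^{k} y_j\bigr]$, matching the quantity stated in the corollary.

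Finally I would invoke Theorem~\ref{thm:prophet_threshold_inequality} directly: it asserts that, for every agent, the single-threshold strategy $T^\ell$ guarantees an expected utility of at least $T^\ell$ regardless of the strategies of the other agents. Applying this for $\ell=k$ yields that the strategy $T^k$ guarantees an expected utility of at least $T^k = \frac{1}{2k}\E\bigl[\sum_{j=1}^{k} y_j\bigr]$, which is exactly the conclusion.

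There is no genuine obstacle here: the only thing to be careful about is the indexing convention, i.e.\ confirming that $\sum_{j=1}^{\ell}\E[y_j]$ with $\ell = k$ truly captures the top $k$ order statistics (which it does, since $y_j$ is defined as the $j$th largest reward). Consequently the corollary follows in a single line once Theorem~\ref{thm:prophet_threshold_inequality} is available.
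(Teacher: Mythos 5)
Your proposal is correct and is exactly the argument the paper intends: the corollary is stated as the special case $\ell=k$ of Theorem~\ref{thm:prophet_threshold_inequality}, and substituting $\ell=k$ into the definition of $T^{\ell}$ together with linearity of expectation gives $T^{k}=\frac{1}{2k}\E\bigl[\sum_{j=1}^{k}y_j\bigr]$. Nothing further is needed.
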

\begin{corollary}
	The single-threshold strategy $T^1$ guarantees an expected utility of at least $\frac{1}{k+1}\E[y_1]$.
\end{corollary}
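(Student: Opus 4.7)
The plan is to derive this corollary as an immediate instantiation of Theorem~\ref{thm:prophet_threshold_inequality} with the parameter choice $\ell=1$. Specifically, I would unpack the definition $T^\ell = \frac{1}{k+\ell}\sum_{j=1}^{\ell}\E[y_j]$ at $\ell=1$ to obtain $T^1 = \frac{1}{k+1}\E[y_1]$, and then invoke the guarantee of the theorem (which asserts that the single-threshold strategy $T^\ell$ secures an expected utility of at least $T^\ell$) to conclude $u_i(T^1, S_{-i}) \geq T^1 = \frac{1}{k+1}\E[y_1]$ for every opponent profile $S_{-i}$.

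Because the theorem has already established the bound uniformly over $\ell \in \{1,\ldots,n\}$, there is essentially no additional work: the corollary is obtained by a single substitution. In particular, I do not need to reopen the probabilistic argument of the theorem (the splitting into the event of receiving some reward and the residual tie-breaking bound), since that argument already handles the $\ell=1$ case as a special instance.

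There is no real obstacle here. The only thing to verify for cleanliness is that the summation $\sum_{j=1}^{\ell}\E[y_j]$ collapses to the single term $\E[y_1]$ when $\ell=1$, which is immediate, and that the parameter range $\ell=1,\ldots,n$ in the theorem includes $\ell=1$, which it does. Hence the proof is a one-line invocation of Theorem~\ref{thm:prophet_threshold_inequality} at $\ell=1$.
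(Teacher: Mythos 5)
Your proposal is correct and matches the paper exactly: the corollary is stated there as an immediate specialization of Theorem~\ref{thm:prophet_threshold_inequality} to $\ell=1$, with $T^1 = \frac{1}{k+1}\E[y_1]$, and no further argument is needed.
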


We now show that the bound in Theorem~\ref{thm:prophet_threshold_inequality} is tight.
\begin{proposition}\label{pro:lb_random}
	For every $\epsilon>0$ there exists an instance such that in the unique equilibrium of the game, no agent gets an expected utility of more than  $\frac{1}{k+\ell}\sum_{j=1}^{\ell} \E [y_j]+\epsilon$ for any $\ell \leq n$.
\end{proposition}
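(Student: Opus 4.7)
The plan is to exhibit, for every $\epsilon>0$, a single deterministic instance whose unique subgame perfect equilibrium has utility simultaneously close to $T^\ell$ for every $\ell\le n$. Fix a tiny $\delta>0$, take $n=k$, and set
\[
v_1=(k+1)\epsilon,\qquad v_t=\epsilon\bigl(1-(t-2)\delta\bigr)\ \text{for }t=2,\ldots,k.
\]
The heavy leading reward forces $\sum_{j=1}^\ell \E[y_j]=(k+\ell)\epsilon+O(\delta)$, so that $T^\ell=\epsilon+O(\delta)$ \emph{uniformly} in $\ell$, while the near-uniform tail forces the equilibrium utility to be the symmetric average $\tfrac{1}{k}\sum_j v_j\approx 2\epsilon$. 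Thus $u-T^\ell\approx \epsilon$ simultaneously for all $\ell$.

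\emph{Uniqueness.} Since $v_1>v_2>\cdots>v_k>0$ strictly, I would argue by backward induction on the subgames that taking strictly dominates waiting at every node. With $k'$ active agents and continuation value $\bar v$ from the remaining rounds, taking yields $v_t/k'+\tfrac{k'-1}{k'}\bar v$, while waiting yields $\bar v$; the difference $(v_t-\bar v)/k'$ is strictly positive because $v_t$ exceeds every subsequent $v_s$. Hence the unique SPE is the ``always take'' profile.

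\emph{Utility and the bound.} By the standard symmetry argument already used in the proof of Theorem~\ref{thm:prophet_threshold_inequality}, each agent wins round $j$ with probability $\tfrac{k-j+1}{k}\cdot\tfrac{1}{k-j+1}=\tfrac{1}{k}$, so her expected utility is $u=\tfrac{1}{k}\sum_{j=1}^k v_j=2\epsilon-\tfrac{\epsilon\delta(k-1)(k-2)}{2k}$. Since $v_1$ is the strict maximum and $v_2,\ldots,v_k$ are already sorted decreasingly, $y_j=v_j$, whence $T^\ell=\epsilon-\tfrac{\epsilon\delta(\ell-1)(\ell-2)}{2(k+\ell)}$. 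Subtracting gives
\[
u-T^\ell=\epsilon-\epsilon\delta\left[\frac{(k-1)(k-2)}{2k}-\frac{(\ell-1)(\ell-2)}{2(k+\ell)}\right]\le\epsilon,
\]
because $(\ell-1)(\ell-2)\le(k-1)(k-2)$ and $k\le k+\ell$ for every $\ell\le k$. Hence $u\le T^\ell+\epsilon$ for every $\ell\le n$, as required.

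The chief difficulty is \emph{strategic} uniqueness of the equilibrium: with identical tail rewards ($\delta=0$), agents in each post-round-1 subgame are indifferent among many profiles (all still yielding utility $\epsilon$), so the SPE would not be strategy-unique (only utility-unique). The perturbation $\delta>0$ strictly breaks these ties; its impact on both $u$ and $T^\ell$ is only $O(\delta)$, so any sufficiently small $\delta$ preserves the $\epsilon$-tolerance in the inequality above.
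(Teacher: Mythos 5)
Your computations are internally consistent: the backward-induction argument that ``always compete'' is the unique SPE when the rewards are strictly decreasing is sound (an active agent's continuation value is at most $\max_{s>t} v_s < v_t$, so competing is a strict best response whether or not others compete), and the symmetry argument giving each agent a winning probability of $1/k$ in each round is the right calculation. The inequality $u \le T^\ell + \epsilon$ does hold for your instance.

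The problem is that your construction establishes the proposition only by exploiting the additive form of the $+\epsilon$ slack, and in doing so it loses the content the proposition is meant to carry. In your instance every value is of order $\epsilon$, so $T^\ell \approx \epsilon$ while the equilibrium utility is $u \approx 2\epsilon = 2T^\ell$: each agent obtains \emph{twice} the guarantee of Theorem~\ref{thm:prophet_threshold_inequality}, and $u \le T^\ell + \epsilon$ holds only because $\epsilon$ happens to equal $T^\ell$. By the same token the all-zero instance, or any instance rescaled by a factor of $\epsilon$, would ``prove'' the statement, so the construction does not show that the guarantee of Theorem~\ref{thm:prophet_threshold_inequality} is tight --- if anything, your instance is one where every agent can secure double that guarantee. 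A meaningful matching upper bound must drive the ratio $u/T^\ell$ to $1$ while $T^\ell$ stays bounded away from zero. The paper's instance does exactly this: it takes $n-1$ rewards equal to $1$ followed by a final reward worth $\frac{k+\epsilon}{\epsilon}$ with probability $\epsilon$ (expected value $k+\epsilon$). Then $\E[y_1]=k+1$ and $\E[y_j]=1$ for $2 \le j \le n-1$, so $T^\ell=1$ for every $\ell \le n-1$, while in the unique equilibrium all agents skip the unit rewards and compete over the last one, each obtaining $1+\epsilon/k$. The gap is then an additive $\epsilon/k$ against a threshold normalized to $1$. You should replace your instance with one in which $\frac{1}{k+\ell}\sum_{j=1}^{\ell}\E[y_j]$ does not vanish as $\epsilon \to 0$.
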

\begin{proof}
	Given an $\epsilon>0$, consider the following instance (depicted in Figure~\ref{fig:random}): 
	$$
	v_t=1 \mbox{ for all } t\leq n-1, ~\mbox{  and  }~  v_n=\begin{cases}
	\frac{k+\epsilon}{\epsilon} & \text{ w.p. } \epsilon\\
	0 & \text{ w.p. } 1-\epsilon
	\end{cases}
	$$
	One can easily verify that in the unique equilibrium $S$, all agents compete over the last reward, for an expected utility of 	 $1+\frac{\epsilon}{k}$. 
	It holds that for every agent $i$:
	$$u_i(S) =  1+\frac{\epsilon}{k} \leq 1+\epsilon =  
	%\frac{\ell+ \epsilon \left(\frac{k+\epsilon}{\epsilon}-1\right)}{k+\ell}+\epsilon= 
	\frac{\E[\sum_{j=1}^{\ell}y_j]}{k+\ell}+\epsilon.$$
	This example also shows that there are instances in which the social welfare in equilibrium is at most half the optimal welfare allocation. 
\end{proof}

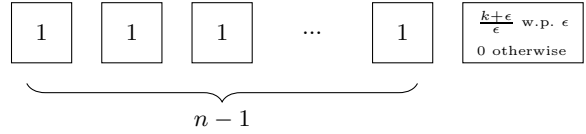
\begin{figure}[h!]
	\centering
	\begin{tikzpicture}[scale=0.4]
	\draw (0,0) rectangle node (C) {$1$} ++(2,2); 
	\draw (3,0) rectangle node (D) {$1$} ++(2,2); 
	%\draw (6,0) rectangle node (E) {$1$} ++(2,2); 
	\draw (6,0) rectangle node (F) {$1$} ++(2,2); 
	\draw [draw=white](9,0) rectangle node (dots) {...} ++(2,2); 
	\draw (12,0) rectangle node (A) {$1$} ++(2,2); 
	\draw (15,0) rectangle node (B)[align=left] {\tiny$\frac{k+\epsilon}{\epsilon}$ w.p. $\epsilon$\\\tiny$0$ otherwise} ++(4,2); 
	\draw[decorate,decoration={brace, amplitude=7pt, raise=13pt, mirror}]
	(C.south west) to node[black,below= 20pt] {$n-1$} (A.south east);%
	\end{tikzpicture}
	\caption{An example where the expected reward is no more than $\frac{1}{k+\ell}\sum_{j=1}^{\ell} \E [y_j]+\epsilon$}
	\label{fig:random}
\end{figure}

\section{Ranked Tie-Breaking}
\label{sec:prophet_lex}
In this section we consider the ranked tie-breaking rule, 
and present a series of single threshold strategies with their guarantees. 
We then show an interesting connection to the setting of a single agent that can choose up to $k$ rewards. 
We start by presenting the single threshold strategies.
 
\begin{theorem}\label{thm:prophet_serial_threshold_inequality}
	For every $i\leq n$ and $\ell=0, \ldots, n-i$, let $\hat{T}_i^\ell=\frac{1}{\ell+2}\sum_{j=i}^{i+\ell} \E [y_j]$. 
	The single threshold strategy $\hat{T}_i^\ell$ (i.e., select $v_t$ iff $v_t\geq \hat{T}_i^\ell$) guarantees an expected utility of at least $\hat{T}_i^\ell$ for the $i$-ranked agent.	
%For agent $i$, the threshold strategy $\hat{T}_i^\ell=\frac{1}{\ell+2}\sum_{j=i}^{i+\ell} \E [y_j]$ for $0 \leq \ell\leq n-i$  gives a $\hat{T}_i^\ell$-guarantee. 
\end{theorem}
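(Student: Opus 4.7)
The plan is to mirror the proof of Theorem \ref{thm:prophet_threshold_inequality}, adapting its key averaging argument to the ranked tie-breaking rule. Fix agent $i$ with threshold $\hat{T}:=\hat{T}_i^\ell$, and let $S_{-i}$ be any strategy profile for the other agents, with $S=(\hat{T},S_{-i})$. Let $p=\sum_j \Pr(A_{i,j})$ be the probability that agent $i$ is assigned some reward under $S$. Since $\hat{T}$ is a threshold strategy, $A_{i,j}$ implies $v_j \geq \hat{T}$, so we have the decomposition
\begin{equation*}
u_i(S) \;=\; \E\Bigl[\sum_j v_j\,\mathbf{1}[A_{i,j}]\Bigr] \;=\; p\,\hat{T} \;+\; \E\Bigl[\sum_j (v_j-\hat{T})^+\,\mathbf{1}[A_{i,j}]\Bigr],
\end{equation*}
and it suffices to show that the overflow expectation on the right is at least $(1-p)\hat{T}$.

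In Theorem \ref{thm:prophet_threshold_inequality}, the core inequality was $\Pr(A_{i,j}\mid \forall_{r<j}\overline{A_{i,r}},v_j\geq T)\geq 1/k$, following from random tie-breaking among up to $k$ agents. Under the ranked rule this pointwise bound degenerates to $0$ in the worst case, so a more global argument is required. Two ingredients replace it. First, the event ``agent $i$ is still active at time $j$'' depends only on $v_1,\ldots,v_{j-1}$ (and the strategies, which themselves may depend only on that history), hence is independent of $v_j$, with probability at least $1-p$. Second, since the $i-1$ higher-ranked agents are collectively assigned at most $i-1$ rewards in total, the ``overflow'' they can steal from agent $i$ over the course of the game is dominated by that of the top $i-1$ order statistics, namely $\sum_{j=1}^{i-1}(y_j-\hat{T})^+$. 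A careful accounting combining these two observations, restricting attention to the contributions of the top $i+\ell$ order statistics, yields
\begin{equation*}
\E\Bigl[\sum_j (v_j-\hat{T})^+\,\mathbf{1}[A_{i,j}]\Bigr] \;\geq\; (1-p)\cdot \E\Bigl[\sum_{j=i}^{i+\ell}(y_j-\hat{T})^+\Bigr].
\end{equation*}

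Finally, by linearity of expectation and the definition of $\hat{T}$,
\begin{equation*}
\E\Bigl[\sum_{j=i}^{i+\ell}(y_j-\hat{T})^+\Bigr] \;\geq\; \sum_{j=i}^{i+\ell}\bigl(\E[y_j]-\hat{T}\bigr) \;=\; (\ell+2)\hat{T}-(\ell+1)\hat{T} \;=\; \hat{T},
\end{equation*}
so combining the two bounds gives $u_i(S)\geq p\,\hat{T} + (1-p)\,\hat{T} = \hat{T}$, as claimed. The main obstacle is establishing the intermediate overflow inequality in the middle paragraph: the ``at most $i-1$ steals'' observation must be combined cleanly with the independence of agent $i$'s activity from $v_j$, in a way that is robust to arbitrary strategies $S_{-i}$ (including strategies where higher-ranked agents use thresholds different from $\hat{T}$, or behave adaptively to pass small rewards down to agent $i$), and that avoids double-counting overflow between what agent $i$ collects and what the higher-ranked agents can possibly claim.
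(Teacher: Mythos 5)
Your proposal follows essentially the same route as the paper's proof: the same decomposition $u_i(S)=p\hat{T}_i^\ell+\E[\sum_j(v_j-\hat{T}_i^\ell)^+\Pr(\forall_{r<j}\overline{\assign{i}{r}},\assign{i}{j})]$, the same $(1-p)$ lower bound on agent $i$ being active, and the same final arithmetic. The ``main obstacle'' you flag is exactly the paper's inequality (2), and the paper resolves it with precisely the two ingredients you name --- conditioning on activity and observing that the at most $i-1$ higher-ranked agents can deny agent $i$ at most $i-1$ above-threshold rewards, so the overflow she collects dominates $\sum_{j= i}^{n}(y_j-\hat{T}_i^\ell)^+$ --- before restricting the sum to $j\leq i+\ell$.
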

\begin{proof}
	Fix an agent $i$. Let $S_{-i}$ be the strategies of all agents except agent $i$, and let $S=(\hat{T}_i^\ell,S_{-i})$.
	Let $\assignn{i}{j}{S}$ denote the event that agent $i$ is assigned the reward $v_j$ in strategy profile $S$. I.e., $\assignn{i}{j}{S}$ is the event that agent $i$ competed over reward $v_j$ and received it according to the ranked tie-breaking rule. 
	For simplicity of presentation, we omit $S$ and write $\assign{i}{j}$.
	We bound the utility of agent $i$ under strategy profile $S$.
	\begin{eqnarray*}
	  u_i(S)  & = & \E\left[ \sum_{j=1}^{n}{v_j\cdot\Pr\left(\assign{i}{j}\right)}\right]    \nonumber \\ 
	 & =& \E\left[\sum_{j=1}^{n}{(\hat{T}_i^\ell+v_j-\hat{T}_i^\ell)\Pr\left(v_j\geq \hat{T}_i^\ell,\forall_{r<j} \overline{\assign{i}{r}},\assign{i}{j}\right)}\right]. \nonumber 
	\end{eqnarray*}

Let $p=\sum_{j=1}^{n} \Pr(v_j\geq \hat{T}_i^\ell,\forall_{r<j}  \overline{\assign{i}{r}},\assign{i}{j})$ (i.e., $p$ is the probability that agent $i$ receives some reward in strategy profile $S=(\hat{T}_i^\ell,S_{-i})$), and let $Z^{+}=\max\{Z,0\}$.
We can now write $u_i(S)$ as follows:
\begin{eqnarray}
	 u_i(S)	  & = & p \cdot\hat{T}_i^\ell + \E\left[\sum_{j=1}^{n}{(v_j-\hat{T}_i^\ell)^+\Pr\left(\forall_{r<j} \overline{\assign{i}{r}},\assign{i}{j}\right)}\right] \nonumber \\
	& \geq & p \cdot  \hat{T}_i^\ell \label{eq:1} + \E\Bigl[\sum_{j=1}^{n}(v_j-\hat{T}_i^\ell)^+ \Bigr.  
	 \left. \cdot (1-p)\cdot\Pr\left(\assign{i}{j} \mid \forall_{r<j}\overline{\assign{i}{r}}\right)\right] \nonumber\\
%\end{eqnarray}
%\begin{eqnarray}
	& \geq & p\cdot \hat{T}_i^\ell + (1-p)\cdot \E\left[\sum_{j=i}^{n}(y_j-\hat{T}_i^\ell)^+\right] \label{eq:2}\\
	& \geq & p\cdot \hat{T}_i^\ell + (1-p)\cdot \E\left[\sum_{j=i}^{i+\ell}(y_j-\hat{T}_i^\ell)\right] \nonumber\\
	& = & p\cdot \hat{T}_i^\ell + (1-p)\cdot \left( \E\left[\sum_{j=i}^{i+\ell}y_j\right]-(\ell+1)\hat{T}_i^\ell \right)\nonumber\\
	& = & p\cdot \hat{T}_i^\ell + (1-p)\cdot \left((\ell+2)\cdot \hat{T}_i^\ell-(\ell+1)\cdot\hat{T}_i^\ell \right) = \hat{T}_i^\ell. \nonumber
	%& = & p\cdot \E\left[\sum_{j=i}^{i+\ell} \frac{y_j}{\ell+2}\right] + (1-p)\cdot\E\left[\sum_{j=i}^{i+\ell} \frac{y_j}{\ell+2}\right] = \hat{T}_i^\ell,
	\end{eqnarray}
	
	%The first inequality holds since the probability of not getting any reward until time $j$ is bounded by $1-p$ (i.e., the probability of not getting any reward). 
	%The last inequality holds since if $v_j-T^\ell\geq 0$ and agent $i$ is still active, the reward is selected, thus assigned with probability at least $1/k$. 
	
	Inequality \eqref{eq:1} holds since the probability of not getting any reward until time $j$ is bounded by $1-p$ (i.e., the probability of not getting any reward). 
	Inequality \eqref{eq:2} holds since there are at most $i-1$ agents that are ranked higher than agent $i$, therefore there are at most $i-1$ rewards that can be selected but not assigned to agent $i$.
	Finally, the last equality holds by the definition of $\hat{T}_i^\ell$.		
\end{proof}

The special case of Theorem~\ref{thm:prophet_serial_threshold_inequality} where $\ell=0$ gives the following corollary.
\begin{corollary}
For every $i$, the threshold strategy $\hat{T}_i^0$ guarantees an expected utility of $\frac{\E[y_i]}{2}$ for the $i$-ranked agent.
\end{corollary}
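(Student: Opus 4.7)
The plan is to derive this statement as an immediate specialization of Theorem~\ref{thm:prophet_serial_threshold_inequality} by plugging in $\ell = 0$. First I would note that $\ell = 0$ is an admissible choice in the theorem's range $\{0, 1, \ldots, n-i\}$ whenever $i \leq n$, so the theorem applies to the $i$-ranked agent without any additional hypothesis.

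Next, substituting $\ell = 0$ into the definition $\hat{T}_i^\ell = \frac{1}{\ell+2}\sum_{j=i}^{i+\ell} \E[y_j]$ collapses the sum to the single term at $j = i$, yielding $\hat{T}_i^0 = \frac{1}{2}\E[y_i]$. Theorem~\ref{thm:prophet_serial_threshold_inequality} then guarantees that the $i$-ranked agent, by playing the single-threshold strategy with threshold $\hat{T}_i^0$, obtains an expected utility of at least $\hat{T}_i^0 = \frac{\E[y_i]}{2}$, which is exactly the claim.

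There is essentially no obstacle in this deduction; the substantive work is already contained in the proof of Theorem~\ref{thm:prophet_serial_threshold_inequality}, particularly the step observing that at most $i-1$ rewards can be diverted to higher-ranked agents, which allows the gain-above-threshold term to be lower bounded using the top-$i$ order statistics. The corollary merely records the cleanest special case, in which the $i$-ranked agent targets only the $i$-th highest realized reward and extracts exactly half of its expectation in the worst case over opponents' strategies.
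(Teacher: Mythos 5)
Your proposal is correct and matches the paper exactly: the corollary is obtained by setting $\ell=0$ in Theorem~\ref{thm:prophet_serial_threshold_inequality}, so that $\hat{T}_i^0=\frac{1}{2}\E[y_i]$ and the theorem's guarantee specializes to the claim. No further argument is needed.
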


We next show that the bound in Theorem~\ref{thm:prophet_serial_threshold_inequality} is tight.
\begin{proposition}\label{pro:lb_serial}
	For every $\epsilon>0$ and every $i \leq n$, there exists an instance such that in the unique equilibrium of the game, the $i$-ranked agent gets an expected utility of at most $\frac{1}{\ell+2}\sum_{j=i}^{i+\ell} \E [y_j]+\epsilon$ for every $\ell \leq n-i$.
\end{proposition}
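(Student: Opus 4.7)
The construction will mirror the random tie-breaking case (Proposition~\ref{pro:lb_random}): I will exhibit a single instance for which, in equilibrium, the $i$-ranked agent receives essentially $1$, while $\hat{T}_i^{\ell}$ is also essentially $1$ for every $\ell\leq n-i$, so that all $n-i+1$ inequalities follow at once. Concretely, fix $\epsilon>0$, set $\delta=\epsilon/2$, choose $n$ large enough that $(1-\delta)/(n-i+2)\leq\delta$, and pick a huge constant $M$. Define
\[
v_t=M\;\text{for } t=1,\ldots,i-1,\qquad v_t=1\;\text{for } t=i,\ldots,n-1,\qquad v_n=\begin{cases}(1+\delta)/\delta & \text{w.p. } \delta,\\ 0 & \text{w.p. } 1-\delta,\end{cases}
\]
so that $\E[v_n]=1+\delta$.

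The next step is to identify the equilibrium. Since $M$ dwarfs every other possible payoff, backward induction gives that agents $1,\ldots,i-1$ grab the first $i-1$ rewards at the earliest opportunity, each obtaining $M$. In the induced subgame starting at round $i$, agent $i$ is the top-ranked active agent; her two meaningful alternatives are to take some $v_t=1$ (payoff $1$) or to wait for $v_n$ (payoff $\E[v_n]=1+\delta$, which she is guaranteed to win by rank), so she strictly prefers to wait. Given this, every lower-ranked agent $j\in\{i+1,\ldots,k\}$ would collect $0$ from waiting (she loses $v_n$ to agent $i$), so her best response is to take any remaining $1$-valued reward. Hence agent $i$'s equilibrium utility equals $1+\delta$.

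Then I would compute the order statistics and verify the bound. The top $i-1$ values are $M$ deterministically; the maximum of $\{v_i,\ldots,v_n\}$ equals $(1+\delta)/\delta$ with probability $\delta$ and $1$ otherwise, so $\E[y_i]=2$; for $i+1\leq j\leq n-1$, $y_j=1$ deterministically and hence $\E[y_j]=1$; and $\E[y_n]=\delta$. Substituting into $\hat{T}_i^{\ell}=\frac{1}{\ell+2}\sum_{j=i}^{i+\ell}\E[y_j]$ gives $\hat{T}_i^{\ell}=(2+\ell)/(\ell+2)=1$ for every $\ell\in\{0,\ldots,n-i-1\}$, and $\hat{T}_i^{n-i}=(n-i+1+\delta)/(n-i+2)=1-(1-\delta)/(n-i+2)$. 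The required inequality $1+\delta\leq\hat{T}_i^{\ell}+\epsilon$ thus reduces to $\delta\leq\epsilon$ for $\ell<n-i$ (immediate from $\delta=\epsilon/2$) and to $\delta+(1-\delta)/(n-i+2)\leq\epsilon$ for $\ell=n-i$ (true by the choice of $n$).

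The main subtlety I foresee is \emph{uniqueness} of the equilibrium: the lower-ranked agents are indifferent about which $1$-valued reward to take, so the strategy profile is not literally unique, although agent $i$'s utility is pinned down in every subgame-perfect equilibrium. I would handle this either by observing that the proposition only concerns agent $i$'s payoff, or by perturbing the $1$'s to slightly distinct values $1,1-\eta,1-2\eta,\ldots$ for a tiny $\eta>0$, which forces a unique best response for each lower-ranked agent and changes every expectation by at most $O(n\eta)$, which is easily absorbed into the $\epsilon$ slack.
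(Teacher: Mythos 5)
Your construction is essentially identical to the paper's: the first $i-1$ rewards are made so large that agents $1,\ldots,i-1$ grab them, rounds $i$ through $n-1$ are worth $1$, and the last round is a low-probability gamble with expectation slightly above $1$ that agent $i$ (as top-ranked survivor) prefers to wait for. In fact you are somewhat more careful than the paper, which sets the gamble's parameter to $\epsilon$ itself and asserts an equality $1+\epsilon=\frac{1}{2+\ell}\sum_{j=i}^{i+\ell}\E[y_j]+\epsilon$ that only holds exactly for $\ell<n-i$; your choice of $\delta=\epsilon/2$ and large $n$ cleanly covers the $\ell=n-i$ case, and your remark on uniqueness addresses a point the paper leaves implicit.
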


\begin{proof}
	Given some $\epsilon>0$ and $i \leq n$, consider the following instance (depicted in Figure~\ref{fig:ranked}): 
	$$
	v_t = 
	\begin{cases}
	\infty & \text{ for } t < i\\
	1 & \text{ for } i \leq t < n \\
	\frac{1+\epsilon}{\epsilon} \mbox{ w.p. } \epsilon, \mbox{ and } 0  \mbox{ w.p. } 1-\epsilon& \text{ for } t = n
	\end{cases}
	$$
	One can easily verify that in the unique equilibrium of the game, agents $1, \ldots, i-1$ will be assigned rewards $v_1,\ldots,v_{i-1}$, and agent $i$ will be assigned the last reward $v_n$ for an expected utility of $1+\epsilon$.
	It holds that:
	$$ u_i(S) = 1+\epsilon = \frac{\E[\sum_{j=i}^{i+\ell}y_j]}{2+\ell}+\epsilon.$$
\end{proof}

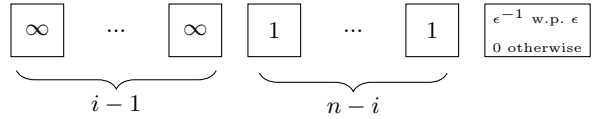
\begin{figure}[h!]
	\centering
	\begin{tikzpicture}[scale=0.35]
	\draw (0,0) rectangle node (G) {$\infty$} ++(2,2); 
	\draw [draw=white](3,0) rectangle node (dots) {...} ++(2,2); 
	\draw (6,0) rectangle node (J) {$\infty$} ++(2,2); 
	\draw (9,0) rectangle node (C) {$1$} ++(2,2); 
	\draw [draw=white](12,0) rectangle node (dots) {...} ++(2,2); 
	\draw (15,0) rectangle node (A) {$1$} ++(2,2); 
	\draw (18,0) rectangle node (B)[align=left] {\tiny${\epsilon}^{-1}$ w.p. $\epsilon$\\\tiny$0$ otherwise} ++(4,2); 
	\draw[decorate,decoration={brace, amplitude=7pt, raise=10pt, mirror}]
	(C.south west) to node[black,below= 16pt] {$n-i$} (A.south east);%
	\draw[decorate,decoration={brace, amplitude=7pt, raise=10pt, mirror}]
	(G.south west) to node[black,below= 16pt] {$i-1$} (J.south east);%
	\end{tikzpicture}
	\caption{An example where the expected reward for agent $i$ is no more than $\frac{1}{\ell+2}\sum_{j=i}^{i+\ell} \E [y_j]+\epsilon$}
	\label{fig:ranked}
\end{figure}

%\TBD{check if the say it or refer to it (\citet{KleinbergW19} HajiaghayiKS07 kennedy1987prophet?)}
We next show that for any instance, the set of rewards assigned to the $k$ competing agents in equilibrium coincides with the set of rewards that are chosen by the optimal algorithm for a single decision maker who can choose up to $k$ rewards and wishes to maximize their sum. 
\citet{KleinbergW19} show that the only optimal strategy of such a decision maker, takes the form of $nk$ dynamic thresholds, $\{T_t^i\}_{i,t}$ for all $t\leq n$ and $i \leq k$, so that the agent accepts reward $v_t$ if $v_t \geq T_t^i$, where $k-i$ is the number of rewards already chosen (i.e., $i$ is the number of rewards left to choose)\footnote{The uniqueness holds for distributions with no mass points. For distributions with mass points, whenever $v_t=T_t^i$, the decision maker is indifferent between selecting and passing.}. Moreover, they show that these thresholds are monotone with respect to $i$.

%Recall that this is the only optimal strategy up to cases where $v_t=T_t^i$.
%\begin{observation}[\citet{KleinbergW19}]
%	For every $t \leq n$ and $i \leq k$, $T_t^i\geq T_t^{i+1}$.
%\end{observation}

With the characterization of the strategy of a single decision maker who can choose up to $k$ rewards, we can characterize the unique SPE for the $k$-agent game\footnote{The SPE is unique up to cases where $T_j^i=v_t$; in these cases the agent is indifferent.}.
\begin{theorem}\label{thm:propht many lex is like a single agent}
Let $\{T_t^i\}_{i \in [k],t \in [n]}$ be the optimal strategy of a single decision maker who may choose up to $k$ rewards and wishes to maximize their sum. The unique SPE of the $k$-agent game is for agent $i$ to accept $v_t$ iff $v_t \geq T_t^{i'+1}$, where $i'\leq i$ is the rank of agent $i$ among the active agents. This SPE is unique up to cases where $v_t = T_t^{i'}$.
\end{theorem}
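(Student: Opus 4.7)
My plan is to argue by backward induction on the time index $t$. Let $V_t^{k',j}$ denote the expected SPE utility of the $j$-th ranked among $k'$ active agents in the subgame starting at time $t$, and let $V_t^i$ denote the optimal expected value for the single decision maker with $i$ slots remaining from time $t$, so that the Kleinberg--Weinberg threshold satisfies $T_{t-1}^j = V_t^j - V_t^{j-1}$. The central inductive claim is that in the unique SPE of the subgame starting at time $t$, the $j$-th ranked active agent ($1 \le j \le k'$) uses the single-threshold strategy $T_t^{j}$ (I read the exponent $i'+1$ in the statement as $i'$, consistent with the uniqueness caveat stated immediately below the theorem), and her continuation utility is $V_{t+1}^{k',j} = V_{t+1}^j - V_{t+1}^{j-1}$. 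Crucially, this expression is independent of $k'$, which is exactly what allows the induction hypothesis to propagate across subgames of different sizes.

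The base case at $t=n+1$ is trivial since all values are $0$. For the inductive step, an indifference argument pins down the $j$-th ranked agent's best-response threshold at time $t$ as her continuation value, which by the induction hypothesis equals $T_t^j$; note that this continuation value is the same whether $k'$ or $k'-1$ agents continue at time $t+1$ (again by the independence from the size index), so the indifference point is well-defined independent of what the lower-ranked agents do. A key structural observation is that the Kleinberg--Weinberg monotonicity $T_t^1 \ge T_t^2 \ge \cdots \ge T_t^{k'}$ implies that whenever $v_t$ is taken, it is taken by the unique highest-ranked willing agent $j^{*}(v_t) = \min\{j : v_t \ge T_t^j\}$: all higher-ranked active agents are too picky, and all lower-ranked willing agents lose the tie. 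To close the induction I would then compute $V_t^{k',j}$ by conditioning on $j^{*}(v_t)$ in the four cases $j^{*}<j$, $j^{*}=j$, $j^{*}>j$, and "no taker", carefully tracking the rank of agent $j$ in the induced next-step subgame. Substituting the induction hypothesis, each continuation value collapses to either $T_t^{j-1}$ or $T_t^j$ uniformly in the size index, and the resulting expression rearranges into the standard Kleinberg--Weinberg recursion for $V_t^j - V_t^{j-1}$, closing the step.

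Uniqueness follows because each agent's best-response threshold is uniquely determined by her indifference condition, with ambiguity only on the measure-zero event $v_t = T_t^{j}$---matching the theorem's caveat. The main technical obstacle is the careful bookkeeping of rank shifts when an agent leaves mid-subgame (which is precisely why the stronger inductive statement "$V_t^{k',j}$ is independent of $k'$" must be proved rather than the weaker per-size statement), together with the algebraic verification that the conditional-expectation formula for $V_t^{k',j}$ really does telescope into the Kleinberg--Weinberg marginal $V_t^j - V_t^{j-1}$. The conceptual punchline is that the $i'$-th ranked active agent behaves exactly like the Kleinberg--Weinberg decision maker at the moment she has $i'$ slots remaining, so the SPE distributes the single agent's dynamic policy into a static assignment of one threshold per rank.
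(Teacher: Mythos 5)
Your proposal is correct, but it takes a genuinely different route from the paper's proof. You run a backward induction on time, defining SPE value functions $V_t^{k',j}$ and proving the strengthened invariant $V_t^{k',j}=V_t^j-V_t^{j-1}$ (independent of the number $k'$ of active agents), from which each agent's threshold $T_t^j$ falls out of her indifference condition in the pivotal event that no higher-ranked agent competes; the four-case conditioning on $j^{*}(v_t)$ does indeed telescope into the Kleinberg--Weinberg recursion for $V_t^j-V_t^{j-1}$ (the regions $v_t\geq T_t^{j-1}$, $T_t^j\leq v_t<T_t^{j-1}$, and $v_t<T_t^j$ contribute $V_{t+1}^{j-1}-V_{t+1}^{j-2}$, $v_t$, and $V_{t+1}^{j}-V_{t+1}^{j-1}$ respectively in both expressions), so the plan goes through. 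The paper instead inducts on the agent index $i$: it upper-bounds the total utility of agents $1,\ldots,i$ by $u(S^i)$, the value of a single decision maker with $i$ slots who can simulate the $i$-agent game, concludes via the induction hypothesis that agent $i$ can get at most the marginal $u(S^i)-u(S^{i-1})$, and observes that the prescribed thresholds make agents $1,\ldots,i$ collectively replicate $S^i$, so this marginal is attained; uniqueness is inherited from uniqueness of the single decision maker's optimal policy. The paper's argument is shorter and uses the single-agent characterization as a black box, but leaves the subgame-perfection and uniqueness claims somewhat implicit; your dynamic-programming argument carries more bookkeeping (the rank shifts and the $k'$-independence are exactly the load-bearing parts) but is more self-contained, pins down every agent's continuation value in every subgame explicitly, and derives uniqueness directly from the stagewise indifference conditions. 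Your reading of the threshold as $T_t^{i'}$ for $i'$ the (one-indexed) rank among active agents, rather than $T_t^{i'+1}$, is the one consistent with both the theorem's own uniqueness caveat and with the paper's proof, which needs the lowest-ranked of $k'$ active agents to use exactly $T_t^{k'}$ so that the collective replicates $S^{k'}$.
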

\begin{proof}
	Let $S^i$ denote the optimal strategy of the single agent who may choose up to $i$ rewards, as described above. 
	Let $S_i$ be the strategy of agent $i$ as described in the assertion of the theorem.
	We prove by induction that for every $i \in [k]$, the rewards that are chosen by agents $1,\ldots,i$ correspond to the rewards chosen by a single decision maker, who may choose up to $i$ rewards, and uses strategy $S^i$. 
	For the case of $i=1$, the claim holds trivially. %, since the settings are the same.
	Assume the claim holds for any number of agents smaller than $i$. 
	Since agent $i$ has no influence on the rewards received by agents $1,\ldots,i-1$, we may assume that agents $1,\ldots,i-1$ are playing according to strategies $S_1,\ldots,S_{i-1}$.
	
	%Let $S^i$ be the optimal strategy for a single decision maker who may choose up to $i$ awards.
	For every $i\in [k]$, the total utility of agents $1,\ldots,i$ is bounded by the utility of the single decision maker $u(S^i)$, since the single decision maker can simulate a game with $i$ competing agents. Hence, by the induction hypothesis, agent $i$ can obtain a utility of at most $u(S^i)-u(S^{i-1})$. By playing according to $S_i$, we are guaranteed that whenever at least $j$ agents are still active, any reward $v_t$ such that $v_t \geq T_t^{j}$ will be taken by one of the agents. Thus, when every agent $i$ is playing according to $S_i$, players $1, \ldots, i$ play according to $S^i$. Consequently, their total utility is $u(S^i)$, and the utility of agent $i$ is then maximal. The uniqueness (up to the cases where $v_j = T_j^{i'}$) is by the uniqueness of the optimal strategy of the single decision maker.     
\end{proof}

We note that by Theorem~\ref{thm:prophet_serial_threshold_inequality} it holds that
in the unique SPE described in Theorem~\ref{thm:propht many lex is like a single agent}, every agent  $i$ receives at least $\max_{\ell=0}^{n-i} \frac{1}{\ell+2}\sum_{j=i}^{i+\ell} \E [y_j]$. 

Using the results of \citet{alaei2011bayesian} regarding a single decision maker choosing $k$ rewards, we deduce an approximation of the social welfare in equilibrium:
%By the algorithm of \citet{alaei2014bayesian} we can deduce an approximation of the social welfare:
\begin{corollary} \label{cor:welfare_prophet}
	In SPE of the $k$ agent prophet game, the expected social welfare is at least $1-O(\frac{1}{\sqrt{k}})$ of the optimal welfare.
\end{corollary}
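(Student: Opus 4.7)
The plan is to chain together two results: Theorem~\ref{thm:propht many lex is like a single agent}, which identifies the rewards taken in any SPE of the $k$-agent game with the rewards selected by the optimal online algorithm for a single decision maker constrained to pick at most $k$ items, and the result of \citet{alaei2011bayesian} for the uniform matroid prophet problem, which gives a $1-O(1/\sqrt{k})$ approximation to the offline optimum $\E\bigl[\sum_{j=1}^{k} y_j\bigr]$.

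First I would spell out the benchmarks. The optimal (offline) social welfare of the $k$-agent prophet game is $\OPT = \E\bigl[\sum_{j=1}^{k} y_j\bigr]$, since the planner may assign each of the $k$ largest realized rewards to a distinct agent. Let $W(S)$ denote the expected social welfare under a strategy profile $S$, and let $u^\star_k$ denote the expected value achieved by the optimal online single decision maker who may select at most $k$ items. Trivially $u^\star_k \leq \OPT$, and by \citet{alaei2011bayesian} one has $u^\star_k \geq (1-O(1/\sqrt{k}))\,\OPT$.

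Next I would invoke Theorem~\ref{thm:propht many lex is like a single agent}. It states that in the (essentially) unique SPE, the $i$-ranked active agent accepts $v_t$ iff $v_t \geq T_t^{i'+1}$, where $\{T_t^i\}$ are the Kleinberg--Weinberg thresholds implementing the optimal single-agent $k$-selection policy. Consequently, the multiset of rewards taken by the $k$ agents in SPE coincides, realization by realization, with the multiset of rewards selected by the optimal single decision maker. Therefore $W(S_{\mathrm{SPE}}) = u^\star_k$, and combining with Alaei's bound gives
\[
W(S_{\mathrm{SPE}}) \;=\; u^\star_k \;\geq\; \Bigl(1-O\bigl(1/\sqrt{k}\bigr)\Bigr)\,\OPT,
\]
which is the claim.

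I do not anticipate a serious technical obstacle: the content lies in Theorem~\ref{thm:propht many lex is like a single agent} and in Alaei's approximation, both of which we are free to cite. The one point that deserves a careful sentence is the equivalence of benchmarks, namely that the offline optimum for our $k$-agent game is indeed $\E[\sum_{j=1}^{k} y_j]$ (each agent is unit-demand and rewards are distinct items), so that the Alaei guarantee stated against the top-$k$ expectation translates directly into a guarantee against the optimal welfare of the multi-agent game.
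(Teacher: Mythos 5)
Your proposal is correct and follows exactly the paper's intended argument: the paper derives the corollary by combining Theorem~\ref{thm:propht many lex is like a single agent} (the SPE selections coincide with those of the optimal single decision maker choosing up to $k$ rewards) with the $1-O(1/\sqrt{k})$ guarantee of \citet{alaei2011bayesian} against $\E[\sum_{j=1}^{k} y_j]$. Your added care about identifying the offline benchmark of the multi-agent game with the top-$k$ expectation is a sensible explicit step that the paper leaves implicit.
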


\section{Discussion and Future Directions}
\label{sec:future}
In this work, we study the effect of competition in prophet settings.
We show that under both random and ranked tie-breaking rules, agents have simple strategies that grant them high guarantees, ones that are tight even with respect to equilibrium profiles under some distributions.

Under the ranked tie-breaking rule, we show an interesting correspondence between the equilibrium strategies of the $k$ competing agents and the optimal strategy of a single decision maker that can select up to $k$ rewards. 
It would be interesting to study whether this phenomenon applies more generally, and what are the conditions under which it holds.

Below we list some future directions that we find particularly natural. 
\begin{itemize}
	\item Study competition in additional problems related to optimal stopping theory, such as Pandora's box \cite{weitzman1979optimal}.% and game of Googol \cite{gnedin1994solution} settings.
	\item Study competition in prophet (and secretary) settings under additional tie-breaking rules, such as random tie breaking with non-uniform distribution, and tie-breaking rules that allow to split rewards among agents.
	\item Study competition in  scenarios where agents can choose multiple rewards, under some feasibility constraints (such as matroid or downward-closed feasibility constraints). 
	\item Consider prophet settings with the objective of outperforming the other agents, as in \cite{immorlica2011dueling}, or different agents' objectives.
	\item Consider competition settings with non-immediate decision making, as in \cite{ezra2020competitive}.
\end{itemize}

\section*{Acknowledgement}
The work was partially supported by the European Research Council (ERC) under the European Union's Horizon 2020 research and innovation program (grant agreement No. 866132, 740282), and by the Israel Science Foundation (grant number 317/17).
%
% ---- Bibliography ----
%
% BibTeX users should specify bibliography style 'splncs04'.
% References will then be sorted and formatted in the correct style.
%
 \bibliographystyle{plainnat}
 \bibliography{prophet}

\begin{thebibliography}{16}
\providecommand{\natexlab}[1]{#1}
\providecommand{\url}[1]{\texttt{#1}}
\expandafter\ifx\csname urlstyle\endcsname\relax
  \providecommand{\doi}[1]{doi: #1}\else
  \providecommand{\doi}{doi: \begingroup \urlstyle{rm}\Url}\fi

\bibitem[Abdelaziz and Krichen(2007)]{abdelaziz2007optimal}
Fouad~Ben Abdelaziz and Saoussen Krichen.
\newblock Optimal stopping problems by two or more decision makers: a survey.
\newblock \emph{Computational Management Science}, 4\penalty0 (2):\penalty0 89,
  2007.

\bibitem[Alaei(2011)]{alaei2011bayesian}
Saeed Alaei.
\newblock Bayesian combinatorial auctions: Expanding single buyer mechanisms to
  many buyers.
\newblock In \emph{2011 IEEE 52nd Annual Symposium on Foundations of Computer
  Science}, pages 512--521. IEEE, 2011.

\bibitem[Ezra et~al.(2020)Ezra, Feldman, and Kupfer]{ezra2020competitive}
Tomer Ezra, Michal Feldman, and Ron Kupfer.
\newblock On a competitive secretary problem with deferred selections, 2020.

\bibitem[Ferguson(1989)]{Ferguson89whosolved}
Thomas~S. Ferguson.
\newblock Who solved the secretary problem?
\newblock \emph{STATISTICAL SCIENCE}, 4\penalty0 (3), 1989.

\bibitem[Hajiaghayi et~al.(2007)Hajiaghayi, Kleinberg, and
  Sandholm]{HajiaghayiKS07}
Mohammad~Taghi Hajiaghayi, Robert~D. Kleinberg, and Tuomas Sandholm.
\newblock Automated online mechanism design and prophet inequalities.
\newblock In \emph{Proceedings of the Twenty-Second {AAAI} Conference on
  Artificial Intelligence}, pages 58--65, 2007.

\bibitem[Hill and Kertz(1992)]{hill1992survey}
Theodore~P Hill and Robert~P Kertz.
\newblock A survey of prophet inequalities in optimal stopping theory.
\newblock \emph{Contemp. Math}, 125:\penalty0 191--207, 1992.

\bibitem[Immorlica et~al.(2006)Immorlica, Kleinberg, and
  Mahdian]{immorlica2006secretary}
Nicole Immorlica, Robert Kleinberg, and Mohammad Mahdian.
\newblock Secretary problems with competing employers.
\newblock In \emph{International Workshop on Internet and Network Economics},
  pages 389--400. Springer, 2006.

\bibitem[Immorlica et~al.(2011)Immorlica, Kalai, Lucier, Moitra, Postlewaite,
  and Tennenholtz]{immorlica2011dueling}
Nicole Immorlica, Adam~Tauman Kalai, Brendan Lucier, Ankur Moitra, Andrew
  Postlewaite, and Moshe Tennenholtz.
\newblock Dueling algorithms.
\newblock In \emph{Proceedings of the forty-third annual ACM symposium on
  Theory of computing}, pages 215--224. ACM, 2011.

\bibitem[Karlin and Lei(2015)]{karlin2015competitive}
Anna Karlin and Eric Lei.
\newblock On a competitive secretary problem.
\newblock In \emph{Twenty-Ninth AAAI Conference on Artificial Intelligence},
  2015.

\bibitem[Kleinberg and Weinberg(2019)]{KleinbergW19}
Robert Kleinberg and S.~Matthew Weinberg.
\newblock Matroid prophet inequalities and applications to multi-dimensional
  mechanism design.
\newblock \emph{Games and Economic Behavior}, 113:\penalty0 97--115, 2019.

\bibitem[Krengel and Sucheston(1977)]{KrengelS77}
U.~Krengel and L.~Sucheston.
\newblock Semiamarts and finite values.
\newblock \emph{Bulletin of the American Mathematical Society}, 83:\penalty0
  745--747, 1977.

\bibitem[Krengel and Sucheston(1978)]{krengel1978semiamarts}
Ulrich Krengel and Louis Sucheston.
\newblock On semiamarts, amarts, and processes with finite value.
\newblock \emph{Advances in Prob}, 4\penalty0 (197-266):\penalty0 1--5, 1978.

\bibitem[Lucier(2017)]{lucier2017economic}
Brendan Lucier.
\newblock An economic view of prophet inequalities.
\newblock \emph{ACM SIGecom Exchanges}, 16\penalty0 (1):\penalty0 24--47, 2017.

\bibitem[Matsui and Ano(2016)]{matsui2016lower}
Tomomi Matsui and Katsunori Ano.
\newblock Lower bounds for bruss' odds problem with multiple stoppings.
\newblock \emph{Mathematics of Operations Research}, 41\penalty0 (2):\penalty0
  700--714, 2016.

\bibitem[Samuel-Cahn(1984)]{samuel1984comparison}
Ester Samuel-Cahn.
\newblock Comparison of threshold stop rules and maximum for independent
  nonnegative random variables.
\newblock \emph{the Annals of Probability}, pages 1213--1216, 1984.

\bibitem[Weitzman(1979)]{weitzman1979optimal}
Martin~L Weitzman.
\newblock Optimal search for the best alternative.
\newblock \emph{Econometrica: Journal of the Econometric Society}, pages
  641--654, 1979.

\end{thebibliography}
\end{document}